\DeclareMathAlphabet{\mathcal}{OMS}{cmsy}{m}{n}
\DeclarePairedDelimiter\floor{\lfloor}{\rfloor}
\theoremstyle{remark}
\newtheorem{rem}{Remark}
\theoremstyle{definition}
\newtheorem{defn}{Definition}
\newtheorem{ass}{Assumption}
\theoremstyle{plain}
\newtheorem{lem}{Lemma}
\newtheorem{thm}{Theorem}
\newtheorem{prop}{Proposition}
\newtheorem{cor}{Corollary}
\title{\LARGE \bf \vspace{18pt}
Control Barrier Functions for Sampled-Data Systems with \newline Input Delays        
}
\author{Andrew Singletary, Yuxiao Chen, and Aaron D. Ames
\thanks{Andrew Singletary, Yuxiao Chen, and Aaron D. Ames are with Department of Mechanical Engineering,
        California Institute of Technology, Pasadena CA 91125, U.S.A. Email addresses:
		{\tt \small \{asinglet,chenyx, ames\}@caltech.edu}}%
}
\begin{document}

\maketitle
\thispagestyle{empty}
\pagestyle{empty}

\begin{abstract}
This paper considers the general problem of transitioning theoretically safe controllers to hardware. Concretely, we explore the application of control barrier functions (CBFs) to sampled-data systems: systems that evolve continuously but whose control actions are computed in discrete time-steps. While this model formulation is less commonly used than its continuous counterpart, it more accurately models the reality of most control systems in practice, making the safety guarantees more impactful. In this context, we prove robust set invariance with respect to zero-order hold controllers as well as state uncertainty, without the need to explicitly compute any control invariant sets. It is then shown that this formulation can be exploited to address input delays in this system, with the result being CBF constraints that are affine in the input. The results are demonstrated in a high-fidelity simulation of an unstable Segway robotic system in real-time.
\end{abstract}

\section{INTRODUCTION}
\label{sec:introduction}
Control theory in practice is almost always implemented in the form of a digital controller on a physical system that evolves continuously. However, these systems are rarely treated as such due to the difficulties that arise in the formulation of controllers that act optimally for these types of systems. Generally, the system is controlled rapidly enough that the time discretization can be ignored, and the entire system can be treated as continuous, allowing for a much larger class of control techniques. This is especially true for robotic systems, where continuous controllers are often implemented at loop rates faster that 1 kHz. 

In the context of safety-critical control, however, it is important to model the system as accurately as possible, in order to extend the guarantees from theory to practice. Moreover, optimization-based controllers tend to be less robust than simple control laws such as PID, therefore call for a more accurate model. As an example, control barrier functions have emerged as a popular technique for guaranteeing safety of continuous-time \cite{ames2017cbf,wang2017safety}, discrete-time \cite{agrawal2017discrete,ahmadi2019safe}, and even sampled-data systems \cite{gurriet2019realizable}. This technique relies on the knowledge of a control invariant subset of the state-space, and can be formulated as a quadratic program wherein the CBF yields a constraint, affine in the input, that ensures the system stays in that set for all time. However, the computation of such sets is notoriously difficult for nonlinear systems \cite{fiacchini2010computation,mitchell2015summary}.

The method proposed in \cite{gurriet2018online} removes the need for a control invariant set, and instead relies on knowledge of a backup controller that takes the system to a small safe region of the state space. This method of implicitly defining a control barrier function has been successfully implemented in a variety of complex applications \cite{sing2019,drew_uav,taylor2020control}, but it suffers from a lack of robustness. 

Another reality of control theory in practice is the presence of input delay. Input delay for myopic, optimization-based controllers (such as Control Lyapunov Functions \cite{ogren2001control} and control barrier functions), is often handled by making the problem formulation robust to any value of the input delay in some bounded set. However, this treatment degrades performance due to conservatism and complicates the (already difficult) computation of the Lyapunov or barrier functions. In practice, the input delay for a system is relatively easy to identify, so it would be beneficial to formulate the problem with the knowledge of the input delay of the system. While many solutions to handling specific time delays have been proposed, in general they either require either linear systems \cite{jankovic2018control,fridman2010refined}, are applicable only to autonomous systems \cite{orosz2019safety}, require difficult construction \cite{jankovic2001control}, or rely on frequency-domain analysis that is not applicable to these optimization-based controllers \cite{zhong2006robust}.

The contributions of this paper are:
\begin{itemize}
    \item We propose a formulation of the implicitly defined control barrier function that is applicable to sampled-data systems, while retaining scalability.
    \item We use the concept of incremental stability \cite{angeli2002lyapunov} to prove robustness of the proposed backup controller-based CBF controller under state uncertainty.
    \item We are able to guarantee safety under a known input delay with much less conservatism under the proposed framework.
\end{itemize}

For the remainder of the paper, Section \ref{sec:zoh} details the conditions required for the control barrier function to provide safety guarantees for the sampled-data system that arises from a zero-order hold controller. The resulting constraint must be evaluated over a set of states, thus Section \ref{sec:state_uncertianty} provides a means for tractably evaluating this constraint. Section \ref{sec:input_delay} provides a method for handling known input delays, while maintaining the safety guarantee in the form of an affine constraint. The proposed method is then applied in Section \ref{sec:sim_results} to a high-fidelity Segway simulation, which demonstrates the result. Finally, Section \ref{sec:conclusion} provides a summary of the work and details future research directions.

\section{Sampled-Data Considerations for CBFs}
\label{sec:zoh}

\subsection{Background on Control Barrier Functions}
In this paper, we consider an affine dynamical system described by the following ordinary differential equation:
\begin{equation}\label{eq:dyn}
    \dot{x}=f(x)+g(x)u,\quad x\in\mathcal{X}\subseteq\mathbb{R}^n,~u\in\mathcal{U}\subseteq\mathbb{R}^m.
\end{equation}

Here, $f:\mathbb{R}^n\rightarrow \mathbb{R}^n$ and $g:\mathbb{R}^n\rightarrow \mathbb{R}^{n\times m}$ are locally Lipschitz functions. $\mathcal{U}$ is the set of admissible inputs, assumed to be a strict subset of $\mathbb{R}^m$. For a given dynamic system, given a feedback policy $u:\mathcal{X}\to\mathcal{U}$, $\phi_t^{u}(x_0)$ be the state evolution of this closed loop system with initial condition $x_0$ at time $t$, i.e., the flow map. With a slight abuse of terminology, we let $\phi_t^{u(\cdot)}$ denote the state flow under a fixed input signal $u(\cdot)$, as opposed to a fixed feedback policy.

Suppose that the system is required to stay within the set described by the function $\mathcal{S} = \{ x \in \mathbb{R}^n ~ | ~ h(x) \geq 0\}$ for all time. To guarantee this constraint is satisfied for all time, one way is to find a control invariant set $\mathcal{S}_I\subseteq \mathcal{S}$.
\begin{defn}
A set $\mathcal{S}_I$ is a control invariant set if there exists a control policy that keeps any trajectory starting within $\mathcal{S}_I$ inside $\mathcal{S}_I$ for all time. 
\end{defn}
If a control invariant subset described as the superlevel-set (see \cite{ames2017cbf}) of a function $h_I(x)$ is known, the CBF condition can be enforced to keep the state within $\mathcal{S}_I$:
\begin{equation}
\label{eq:barrier}
    \left.\frac{dh_I}{dx}\right\rvert_{x}\left( f(x) + g(x) u \right) + \alpha(h_I(x)) \geq 0,
\end{equation}
as an affine constraint on the input $u$. Here, $\alpha(\cdot)$ is an extended class-$\mathcal{K}$ function.

\begin{lem}
\label{lem:barrier}
If the input satisfies the CBF condition \eqref{eq:barrier} for all time, the system will stay in the set $h_I(x)$ for all time. Specifically, let $u(x)$ be a controller that satisfies \eqref{eq:barrier} and is applied to \eqref{eq:dyn} to yield a closed-loop dynamical system: $\dot{x} = f(x) + g(x) u(x) =: f_{\mathrm{cl}}(x)$ (assumed to be forward complete). Then if $x_0\in\mathcal{S}_I$, $\phi_t^{u}(x_0) \in \mathcal{S}_{I}$ for all $t \geq 0$. 
\end{lem}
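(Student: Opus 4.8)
The plan is to reduce the set-invariance claim to a scalar differential inequality along the closed-loop trajectory and then close the argument with a comparison principle. First I would introduce the scalar function $\eta(t) := h_I(\phi_t^{u}(x_0))$, which tracks the barrier value along the flow of $\dot{x} = f_{\mathrm{cl}}(x)$. Since $f_{\mathrm{cl}}$ is assumed forward complete, $\phi_t^{u}(x_0)$ exists for all $t \geq 0$, and differentiability of $h_I$ makes $\eta$ a well-defined differentiable function on $[0,\infty)$ with $\eta(0) = h_I(x_0) \geq 0$ by hypothesis.

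Next I would differentiate $\eta$ along the flow. By the chain rule, $\dot{\eta}(t) = \frac{dh_I}{dx}\rvert_{x}\,f_{\mathrm{cl}}(x)$ evaluated at $x = \phi_t^{u}(x_0)$. Because the applied controller $u(x)$ satisfies the CBF condition \eqref{eq:barrier} pointwise at every state, substituting $u = u(x)$ into \eqref{eq:barrier} and rearranging yields the differential inequality $\dot{\eta}(t) \geq -\alpha(\eta(t))$ for all $t \geq 0$. The whole problem has thus been transferred from the $n$-dimensional flow to this one-dimensional inequality.

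The core step is to show that this inequality, together with $\eta(0) \geq 0$, forces $\eta(t) \geq 0$ for all $t$. I would compare $\eta$ against the solution $y(t)$ of the scalar initial value problem $\dot{y} = -\alpha(y)$, $y(0) = \eta(0)$. Since $\alpha$ is an extended class-$\mathcal{K}$ function we have $\alpha(0)=0$, so $y \equiv 0$ is an equilibrium and any solution starting at $y(0) \geq 0$ remains nonnegative for all forward time. The comparison lemma (the standard scalar ODE comparison principle) then gives $\eta(t) \geq y(t) \geq 0$, so $h_I(\phi_t^{u}(x_0)) \geq 0$, i.e.\ $\phi_t^{u}(x_0) \in \mathcal{S}_I$ for all $t \geq 0$, as desired.

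I expect the main obstacle to be the rigorous justification of this comparison step rather than the chain-rule bookkeeping: one must ensure enough regularity of $\alpha$ (local Lipschitz, or at least a uniqueness guarantee for the comparison ODE) so that nonnegativity of $y$ is genuinely preserved. The extended class-$\mathcal{K}$ assumption is exactly what prevents $\eta$ from crossing zero, since $\dot{\eta} \geq -\alpha(\eta) = 0$ whenever $\eta = 0$. A cleaner alternative that avoids the comparison ODE entirely is a direct Nagumo-style contradiction: if $\eta$ became negative, let $t_1$ be the last time before that at which $\eta(t_1)=0$; just after $t_1$ we would have $\eta < 0$, hence $\alpha(\eta) < 0$, hence $\dot{\eta} \geq -\alpha(\eta) > 0$, contradicting that $\eta$ is decreasing on that interval.
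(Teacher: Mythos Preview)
Your argument is correct and is precisely the standard comparison-lemma / Nagumo-style proof of forward invariance under the CBF condition. The paper itself does not supply an independent proof of this lemma: it simply writes ``See \cite{ames2017cbf} for proof,'' and the argument in that reference is exactly the one you outline (define $\eta(t)=h_I(\phi_t^{u}(x_0))$, obtain $\dot\eta\ge -\alpha(\eta)$, and conclude $\eta(t)\ge 0$ via the scalar comparison principle). So there is no methodological difference to discuss; you have reconstructed the cited proof rather than diverged from it.
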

See \cite{ames2017cbf} for proof.

As mentioned in the introduction, the computation of control invariant sets is very difficult in general. Instead, \cite{gurriet2018online} proposes utilizing a fixed backup controller $u_B(x)$ to enlarge a much smaller control invariant set $S_B= \{ x \in \mathbb{R}^n ~ | ~ h_B(x) \geq 0\}$ (which is much easier to compute) to the set of all states that can be safely brought to $S_B$ with the backup controller $u_B:\mathcal{X}\to\mathcal{U}$. This leads to a control invariant set defined implicitly using the system flow $\phi^{u_B}_t(x_0)$ under $u_B$ as
\begin{equation}
\begin{aligned}
    \label{eq:dsi_set}
    \mathcal{S}_I = \bigg\{ x \in \mathbb{R}^n ~ | ~ \Big( \forall \tau \in [0,T],\  & h\left(\phi^{u_B}_\tau\Big) \geq 0  \right) \textrm{ and }\\ 
    \Big( &h_B\left(\phi^{u_B}_T\right) \geq 0 \Big) \bigg\}
    \end{aligned}
\end{equation}
The CBF condition \eqref{eq:barrier} then is evaluated at a point $x_0$ as
\begin{equation}\label{eq:dsi}
\begin{aligned}
    \left. \frac{dh}{dx}\right\rvert_{\phi^{u_B}_\tau} \left. \frac{\partial \phi^{u_B}_t}{\partial x}\right\rvert_{x_0} &\left( f(x_0)+g(x_0)u \right) + \alpha(h(\phi^{u_B}_\tau(x_0))) \geq 0 \\
    \left. \frac{dh_B}{dx}\right\rvert_{\phi^{u_B}_T} \left. \frac{\partial\phi^{u_B}_T}{\partial x}\right\rvert_{x_0}&\left( f(x_0)+g(x_0)u \right) + \alpha(h_B(\phi^{u_B}_T(x_0))) \geq 0 
    \end{aligned}
\end{equation}
Any input that satisfies \eqref{eq:dsi} for all time $\tau \in [0,T]$ will keep the system in the set $\mathcal{S}_I$. If this constraint is satisfied for all time, the state is kept in the set $\mathcal{S}_I$ indefinitely. See \cite[p. 6]{gurriet2018online}.

\subsection{CBFs Along a Zero-Order Hold Backup Controller}

While condition \eqref{eq:dsi} can be used to guarantee safety for continuous-time systems, it relies on the input being computed and applied continuously. In the presence of a zero-order hold controller, the description of the invariant set need to be modified. The zero-order hold backup controller, denoted as $\overline{u_B}(x(\cdot),t)$, simply takes on the value of $u_B(x)$ every $\Delta_t$ seconds, and holds that value until its next update:
\begin{align}
 \overline{u_B}(x(\cdot),t)=u_B(x(\floor*{t/\Delta_t}\Delta_t)),
\end{align}
where $\floor*{\cdot}$ is the largest integer not greater than the argument.

\begin{rem}
The flow of the system under the zero-order backup controller $\phi^{\overline{u_B}}_t$ is well-defined, as unique solutions to sampled-data systems exist so long as the underlying controller is piecewise continuous (c.f. \cite[p. 16]{lars2011nonlinear})
\end{rem}

The updated control invariant set under the zero-order hold backup controller can be written as
\begin{equation}\label{eq:SIzoh}
\begin{aligned}
    \overline{\mathcal{S}_I} = \bigg\{ x \in \mathbb{R}^n ~ | ~ \Big( \forall \tau \in [0,T],\  & h\left(\phi^{\overline{u_B}}_\tau\Big) \geq 0  \right) \textrm{ and }\\ 
    \Big( &h_B\left(\phi^{\overline{u_B}}_T\right) \geq 0 \Big) \bigg\}
\end{aligned}
\end{equation}
This is a control invariant set for the system under a zero-order controller with the same sampling time $\Delta_t$ as $\overline{u_B}$. The proof follows from \cite[Theorem 1]{gurriet2018online}.

\begin{rem}
\label{rem:nonsmooth}
While the flow of the system under the zero-order hold backup controller is Lipschitz \cite{abbaszadeh2016lipschitz}, it is nonsmooth. Because of this, the barrier function itself is nonsmooth, and thus $\dot{h}$ cannot be expressed at finitely many points, which correspond to when the controller is updated. Despite this, a nonsmooth barrier function is valid if $\dot{h} \geq -\alpha(h)$ almost everywhere. For proof, see \cite[Lemma 2.2]{glotfelter2017nonsmooth}.
\end{rem}

To enforce the CBF condition, $\dot{h}$ needs to be computed, which requires $\frac{\partial \phi^{\overline{u_B}}_t}{\partial x}$ to be evaluated. This expression is continuous over each controller sampling time, and can be computed using finite-differences \cite{leveque1998finite}.
To do this, simply integrate forward $n+1$ initial conditions under $\overline{u_B}$  to evaluate $\frac{\partial \phi_{(i+1)\Delta_t}^{u_B}(x)}{\partial \phi_{i\Delta_t}^{u_B}(x)}$ at each time-step. Then, use the chain rule to get 
\begin{align}
    \frac{\partial \phi_{i\Delta_t}^{u_B}(x)}{\partial x} = \prod_{n=0}^{i-1}\frac{\partial \phi_{(n+1)\Delta_t}^{u_B}(x)}{\partial \phi_{n\Delta_t}^{u_B}(x)}
\end{align}

\subsection{Enforcing the Barrier Condition with Zero-order Hold}

The last caveat to consider in the implementation of the CBF condition is the fact that the condition must be met over the entire time horizon of the zero-order hold controller. Note that this consideration must be taken regardless of the method used for expressing the robust control barrier function, and was a subject of prior research of the authors \cite{gurriet2019realizable}.

Consider verifying the barrier function over the horizon of a single time-step of the zero-order hold controller with sample time $\Delta_t$, 
\begin{equation}
h(\phi_{\tau}^{\overline{u_B}}(x_0)) \geq 0 \qquad  \forall ~ \tau \in [0,\Delta_t].
 \label{eq:single_step}
\end{equation}

The robust satisfaction of the above condition can be verified by checking the stronger condition shown in \cite{gurriet2019realizable},
\begin{equation}
 h(\mathcal{R}(x_0,\Delta_t)) \geq 0,
 \label{eq:reach_cond}
\end{equation}
where $\mathcal{R}(x_0,\Delta_t)$ is the set of states reachable from $x_0$ in time $\Delta_t$ with any input $u \in \mathcal{U}$. 

\begin{rem}
\label{rem:conserve}
This condition adds conservatism to the barrier formulation. While checking only the points $\phi^u_{\tau}(x_0) \ \forall \tau \in [0, \Delta_t]$ would result in a more performant condition, this would make the constraint no longer affine, due to its dependence on the decision variable $u$.
\end{rem}

Let $\overline{u_B}(\cdot)$ be the input signal w.r.t. the nominal state flow, the robust CBF condition defined from the set $\overline{\mathcal{S}_I}$, 
\begin{equation}\label{eq:robust_CBF}
\resizebox{1\columnwidth}{!}{$
\begin{aligned}
    \left. \frac{dh}{dx}\right\rvert_{\phi^{\overline{u_B}(\cdot)}_\tau(\mathbf{x_0})} \frac{\partial \phi^{\overline{u_B}(\cdot)}_\tau(\mathbf{x_0})}{\partial x} &\left( f(\mathbf{x_0})+g(\mathbf{x_0})u \right) + \alpha(h(\phi^{\overline{u_B}(\cdot)}_\tau(\mathbf{x_0}))) \geq 0 \\
    \left. \frac{dh_B}{dx}\right\rvert_{\phi^{\overline{u_B}(\cdot)}_T(\mathbf{x_0})} \frac{\partial \phi^{\overline{u_B}(\cdot)}_T(\mathbf{x_0})}{\partial x}&\left( f(\mathbf{x_0})+g(\mathbf{x_0})u \right) + \alpha(h_B(\phi^{\overline{u_B}(\cdot)}_T(\mathbf{x_0}))) \geq 0
\end{aligned}
$}
\end{equation}
where $\mathbf{x_0} = \mathcal{R}(x_0,\Delta_t)$, and the first inequality must hold for all $\tau \in [0,T]$.

\begin{prop}
Let $(\overline{u}_i)_{i = 0}^\infty$ be a sequence of inputs that satisfies \eqref{eq:robust_CBF} at the beginning of each time-step and is applied with zero-order hold to the system \eqref{eq:dyn}. If $x_0 \in \overline{\mathcal{S}_I}$, then $\phi_t^{\overline{u}}(x_0) \in \overline{\mathcal{S}_I}$ for all $t \geq 0$.
\end{prop}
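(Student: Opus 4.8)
The plan is to exhibit $\overline{\mathcal{S}_I}$ as the zero-superlevel set of a single (nonsmooth) implicit barrier function and then invoke the nonsmooth comparison result of Remark~\ref{rem:nonsmooth}. Concretely, I would define
\begin{equation*}
H(x) := \min\Bigl\{ \min_{\tau \in [0,T]} h\bigl(\phi^{\overline{u_B}}_\tau(x)\bigr),\ h_B\bigl(\phi^{\overline{u_B}}_T(x)\bigr) \Bigr\},
\end{equation*}
so that, by \eqref{eq:SIzoh}, $x \in \overline{\mathcal{S}_I}$ if and only if $H(x) \ge 0$. The proposition is then equivalent to forward invariance of $\{H \ge 0\}$ under the closed-loop, zero-order-hold dynamics generated by $(\overline{u}_i)$. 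I would proceed by induction on the time-step index: assuming $x(t_i) \in \overline{\mathcal{S}_I}$ at the sample instant $t_i = i\Delta_t$, I would show the trajectory remains in $\overline{\mathcal{S}_I}$ throughout $[t_i, t_{i+1}]$; chaining these steps with the hypothesis $x_0 \in \overline{\mathcal{S}_I}$ yields the claim for all $t \ge 0$.

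For the inductive step, the two inequalities of \eqref{eq:robust_CBF} are precisely the CBF conditions \eqref{eq:barrier} applied to the two constituent terms of $H$, with the flow sensitivities $\partial \phi^{\overline{u_B}}_\tau / \partial x$ supplied by the chain rule. Since $H$ is a pointwise minimum of these terms, its Clarke generalized gradient is the convex hull of the gradients of the currently active arguments; imposing the first inequality of \eqref{eq:robust_CBF} for every $\tau \in [0,T]$ (and hence for every active $\tau$) together with the second therefore forces $\dot H \ge -\alpha(H)$ at every point where $H$ is differentiable.

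The robustness is what evaluation over $\mathbf{x_0} = \mathcal{R}(x_0,\Delta_t)$ buys. Because the input is recomputed only at $t_i$ and then held constant, the true state can occupy any point of $\mathcal{R}(x(t_i),\Delta_t)$ before the next update; by \eqref{eq:reach_cond}, checking \eqref{eq:robust_CBF} over this reachable set---rather than only at the sampled point $x(t_i)$---guarantees the CBF inequality holds at every state actually visited on $[t_i,t_{i+1}]$, not merely at $t_i$. This upgrades the sampled, pointwise-in-time condition to one valid continuously across the interval, which is exactly what the comparison argument needs.

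The main obstacle is reconciling the two sources of nonsmoothness. The flow $\phi^{\overline{u_B}}_\tau$ is only Lipschitz, with kinks at the internal switching instants of the zero-order-hold backup controller (Remark~\ref{rem:nonsmooth}), and the outer $\min$ adds further kinks where the active argument changes. One must verify that the set of times at which $\dot H$ fails to exist---a countable union of the backup-controller switching times and the argmin-switching times---has measure zero, so that the ``almost everywhere'' hypothesis of \cite[Lemma 2.2]{glotfelter2017nonsmooth} is satisfied. Granting this, the nonsmooth comparison lemma lower-bounds $H(x(t))$ by the solution of $\dot y = -\alpha(y)$ with $y(t_i) = H(x(t_i)) \ge 0$, which remains nonnegative; hence $H(x(t)) \ge 0$ on $[t_i,t_{i+1}]$, closing the induction.
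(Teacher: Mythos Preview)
Your proposal is correct and follows the same route as the paper: induction over sample intervals, using the reachable-set evaluation $\mathbf{x_0}=\mathcal{R}(x(t_i),\Delta_t)$ to upgrade the sampled CBF inequality to one holding at every state actually visited on $[t_i,t_{i+1}]$, and then invoking a barrier-invariance result (the paper cites Lemma~\ref{lem:barrier} directly; you make this explicit by packaging the constraints into the single min-barrier $H$ and appealing to the nonsmooth comparison lemma of Remark~\ref{rem:nonsmooth}). Your handling of the two sources of nonsmoothness is more careful than the paper's terse argument, but the skeleton is identical.
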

\begin{proof}
Consider any time interval of a single time-step $\mathcal{T} = [t_0,t_0+\Delta_t]$, and assume $x_{t_0} \in \overline{\mathcal{S}_I}$.

Denote $\Phi:= \{\bigcup_{t \in \mathcal{T}}\phi_t^{\overline{u}_i}(x_{t_0}) \}$. Since $\mathbf{x_{t_0}} = x_{t_0}+\mathcal{R}(x_{t_0},\Delta_t)$ covers all states reachable from time $t_0$, $\Phi \subset \phi_t^{\overline{u}_i}(\mathbf{x_{t_0}})$,
Therefore, if $\overline{u}_i$ satisfies \eqref{eq:robust_CBF} for $\mathbf{x_{t_0}}$, then the CBF condition holds for $\Phi$ as well. Thus, by the invariance of $\overline{\mathcal{S}_I}$ and Lemma \ref{lem:barrier}, $\phi_t^{\overline{u}_i}(x_{t_0}) \in \overline{\mathcal{S}_I}$ for all $t \in \mathcal{T}$.

Since this condition is met over the entire sequence of time-steps, $\phi_t^{\overline{u}}(x_0) \in \overline{\mathcal{S}_I}$ for all $t \geq 0$. 
\end{proof}

\begin{rem}
It is possible that, for some $x_0 \in \mathcal{S}_I$, that $x_0 \in \mathcal{S}_I$ but $\mathbf{x_0} \notin \mathcal{S}_I$. In this case, the system is inside of its control invariant set, but the CBF condition \eqref{eq:robust_CBF} cannot be satisfied. This is due to conservatism mentioned in Remark \ref{rem:conserve}. However, when this occurs, the backup control action can be taken. Thus, the system will stay safe for all time. Furthermore, this occurs on a very small set at the boundary of $\mathcal{S}_I$, which the strengthening term $\alpha(\cdot)$ makes difficult to reach.
\end{rem}

Note that the condition is evaluated here over a set, rather than a single point. The evaluation of $\phi^{\overline{u_B}(\cdot)}_\tau(\mathbf{x_0})$ poses the most difficulty, as it involves robustly integrating over a set. This makes techniques like interval arithmetic \cite{jaulin2001interval,gurriet2018towards} difficult to implement, due to numerical issues.


\section{State Uncertainty}
\label{sec:state_uncertianty}
In this  section, the concept of incremental stability will be used to show that for any $\tau \in [0,T]$, a fixed control signal $u(\cdot)$ and an uncertainty set $\bm{\Delta}_x$, $\phi^{u(\cdot)}_\tau(x_0+\bm{\Delta}_x) \subseteq \phi^{u(\cdot)}_\tau(x_0)+\bm{\Delta}_x$. Thus, the CBF condition \eqref{eq:robust_CBF} can be evaluated over the entire set as an affine condition without requiring robust integration over sets.



\subsection{Incremental stability with Lyapunov functions}
To show that
safety can be guaranteed for a small neighborhood of initial conditions, we adopt the concept of incremental stability (c.f. \cite{angeli2002lyapunov}). 

\begin{defn}
Given the dynamic system in \eqref{eq:dyn}, the system is incrementally stable inside a set $\mathcal{X}\subseteq \mathbb{R}^n$ if $\forall T\ge 0$, $\forall~ x_1,x_2\in\mathcal{X}$ and $u(\cdot):[0,T]\to\mathbb{R}^m$ such that $\phi_t^{u(\cdot)}(x_1)$ and $\phi_t^{u(\cdot)}(x_2)$ stay inside $\mathcal{X}$, the evolution of the state satisfies $\left\|\phi^{u(\cdot)}_t(x_1)-\phi^{u(\cdot)}_t(x_2)\right\|\le \beta(\left\|x_1-x_2\right\|,t)$, where $\beta:\mathbb{R}\times[0,T]\to\mathbb{R}$ is nonincreasing in $t$ and $\forall~t\in[0,T], ~ \beta(\cdot,t)$ is a class-$\mathcal{K}$ function. 
\end{defn}

\begin{prop}
Suppose there exists a Lyapunov function $V:\mathcal{X}\to\mathbb{R}$ that satisfies $c_1||x||\le V(x)\le c_2||x||$ for some $c_2\ge c_1> 0$. For two initial conditions $x_1,x_2\in\mathcal{X}$ and an input signal $u(\cdot)$ such that $\phi^{u(\cdot)}(x_1),\phi^{u(\cdot)}(x_2)$, and $\phi^{u(\cdot)}(x_1)-\phi^{u(\cdot)}(x_2)\in\mathcal{X}$, let $V(t)=V(\phi^{u(\cdot)}_t(x_1)-\phi^{u(\cdot)}_t(x_2))$. If $\dot{V}\le 0$, then the system is locally incrementally stable in $\mathcal{X}$.
\end{prop}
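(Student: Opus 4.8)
The plan is to track the error trajectory $e(t) := \phi_t^{u(\cdot)}(x_1) - \phi_t^{u(\cdot)}(x_2)$ and to observe that the hypotheses turn a single scalar monotonicity fact into a uniform bound on $\|e(t)\|$. First I would note that since $V(t) = V(e(t))$ and $\dot V \le 0$ by assumption, the function $t \mapsto V(t)$ is nonincreasing, so $V(t) \le V(0)$ for every $t \in [0,T]$. The standing assumption that $\phi^{u(\cdot)}(x_1)$, $\phi^{u(\cdot)}(x_2)$, and their difference all remain in $\mathcal{X}$ is exactly what guarantees $e(t) \in \mathcal{X}$, so $V(t)$ is well-defined along the whole horizon and the sandwich bounds $c_1\|x\| \le V(x) \le c_2\|x\|$ apply at each time.

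The core step is then a two-sided squeeze. Applying the lower bound at time $t$ and the upper bound at time $0$ gives
\begin{equation*}
c_1 \left\| e(t) \right\| \;\le\; V(t) \;\le\; V(0) \;\le\; c_2 \left\| e(0) \right\|,
\end{equation*}
and since $c_1 > 0$ this rearranges to
\begin{equation*}
\left\| \phi_t^{u(\cdot)}(x_1) - \phi_t^{u(\cdot)}(x_2) \right\| \;\le\; \frac{c_2}{c_1} \left\| x_1 - x_2 \right\|.
\end{equation*}
This is precisely the incremental-stability estimate once I exhibit an admissible comparison function, so the remaining work is to package the right-hand side into a $\beta$ of the required form.

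To that end I would define $\beta(r,t) := (c_2/c_1)\, r$ and verify the two structural requirements from the definition. As a function of its first argument, $\beta(\cdot,t)$ is continuous, strictly increasing (because $c_2/c_1 \ge 1 > 0$), and vanishes at $r = 0$, hence class-$\mathcal{K}$. As a function of $t$ it is constant, and a constant function is (weakly) nonincreasing, so the monotonicity requirement is met trivially. Applying this $\beta$ with $r = \|x_1 - x_2\|$ reproduces the displayed bound for every $t$, which establishes local incremental stability on $\mathcal{X}$.

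The argument is short and I expect no genuinely hard calculation; the only points requiring care are bookkeeping rather than depth. The first is confirming that $V$ is differentiable along $e(t)$ and that $\dot V \le 0$ is to be read as holding almost everywhere (consistent with the nonsmoothness of the zero-order-hold flow discussed in Remark \ref{rem:nonsmooth}), so that $V(t) \le V(0)$ still follows by integration. The second, and the subtlety I would flag explicitly, is the quantifier gap between the proposition's hypothesis—stated for a fixed pair $x_1,x_2$ and signal $u(\cdot)$—and the definition of incremental stability, which ranges over \emph{all} such pairs and horizons; the proof is only meaningful if $\dot V \le 0$ is understood as a property of the error dynamics holding uniformly over $\mathcal{X}$ rather than for one trajectory, and I would state this reading at the outset.
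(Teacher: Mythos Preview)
Your argument is correct and is exactly the one the paper has in mind: the authors' proof is a single sentence stating that it ``follows from the fact that $V(\cdot)$ and $\|\cdot\|$ are equivalent norms,'' and your sandwich inequality $c_1\|e(t)\|\le V(t)\le V(0)\le c_2\|e(0)\|$ together with the choice $\beta(r,t)=(c_2/c_1)r$ simply spells out that remark. The caveats you raise about the almost-everywhere reading of $\dot V\le 0$ and the implicit universal quantification over $x_1,x_2,u(\cdot)$ are apt and go beyond what the paper makes explicit.
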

\begin{proof}
The proof follows from the fact that $V(\cdot)$ and $||\cdot||$ are equivalent norms.
\end{proof}

In the context of control barrier functions with a backup strategy, if the system is incrementally stable, then given a nominal initial condition $x_0$ and an uncertainty set characterized as a level-set of the Lyapunov function, $\forall x\in\{x|V(x-x_0)\le \epsilon\}$, for any input signal $u(\cdot)$, $\phi^{u(\cdot)}_t(x)\in\{x|V(x-\phi^{u(\cdot)}_t(x_0))\le\epsilon\}$. We shall show in Section \ref{sec:generalizing_uncertainty} how this result can simplify the robust CBF condition in \eqref{eq:robust_CBF}, which requires that the CBF condition hold for a small set $\mathcal{R}(x_0,\Delta_t)$ around the nominal initial condition. 

\subsection{Gaining incremental stability via pre-feedback}\label{incremental}
The Segway model that considered in Section \ref{sec:sim_results} is not incrementally stable, but pre-feedback can be used to make it so. Since the error dynamics are being considered, the nonlinear dynamics are linearized to simplify the analysis. Given a set $\mathcal{X}\subseteq\mathbb{R}^n$ of states, multiple linear dynamics models $\dot{x}=A_i x+B_i u,~i=1,...,N$ can be obtained by considering the extreme points of $\mathcal{X}$. Given a quadratic Lyapunov function $V=x^\intercal P x$ where $P$ is symmetric and positive definite, and an input set hyperbox defined as $\mathcal{U}=\{-u^{\max}\le u\le u^{\max}\}\subseteq \mathbb{R}^m$, we develop the following Linear Matrix Inequality (LMI) to search for a pre-feedback gain that guarantees incremental stability for the system:
\begin{equation}\label{eq:LMI}
\begin{aligned}
    \mathop {\min }\limits_{K\in\mathbb{R}^{n\times m}} ~&{||\Lambda P^{-\frac{1}{2}}K^\intercal||_{\infty}}\\
    \mathrm{s.t.}~&\forall i=1,...,N, P(A_i+B_i K)+(A_i+B_i K)^\intercal P\le 0,
    \end{aligned}
\end{equation}
where $\Lambda = 
\textrm{diag}(\frac{1}{u_1^{max}},~ ... ~, ~\frac{1}{u_m^{max}})$.

The cost function is chosen due to the fact that
\begin{equation}\label{eq:K_opt}
   \{\mathop {\max }\limits_x ~ |K_i x|~ \mathrm{s.t.}~ x^\intercal P x\le 1\}=\sqrt{K_iP^{-1}K_i^\intercal},
\end{equation}
which means that the pre-feedback is available within the level set $\{x|x^\intercal P x\le \min\limits_{i=1,...,N}{\frac{u^{\max}_i}{\sqrt{K_i P^{-1}K_i^\intercal}}\}}$. Therefore, minimizing the cost function in \eqref{eq:LMI} is maximizing the size of the level-set of the Lyapunov function in which the pre-feedback is available.

\begin{prop}
Given a dynamic system as described in \eqref{eq:dyn} with $\mathcal{U}=\{-u^{\max}\le u\le u^{\max}\}$, a set $\mathcal{X}\subseteq \mathbb{R}^n$, and a Lyapunov function $V(x)=x^\intercal P x$, $P\ge 0$, assume that $\forall~x_1,x_2\in \mathcal{X}$, $\forall~ u\in\mathcal{U}$, $f(x_1)+g(x_1)u-f(x_2)-g(x_2)u\in Conv{(A_i)(x_1-x_2)}$. Then, with a $K$ solved with \eqref{eq:K_opt}, the system with pre-feedback $\dot{x}=f(x)+g(x)(u+Kx)$ is incrementally stable within $\mathcal{X}\cap\{x|x^\intercal P x\le \min\limits_{i=1,...,N}{\frac{u^{\max}_i}{\sqrt{K_i P^{-1}K_i^\intercal}}\}}$.
\end{prop}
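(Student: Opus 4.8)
The plan is to reduce the claim to the Lyapunov-based incremental stability criterion established earlier (Proposition~2): it suffices to produce a candidate whose value along the difference of any two trajectories driven by a common external signal is nonincreasing. I would take the candidate to be $V(x)=x^\intercal P x$ with the $P$ and $K$ returned by the LMI \eqref{eq:LMI}. Since $P$ is positive definite, $\sqrt{V}$ is a norm equivalent to $\|\cdot\|$, so the hypotheses of Proposition~2 are met and the entire problem collapses to verifying $\dot V\le 0$ on the indicated set.

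First I would write the error dynamics. Fix $x_1,x_2$ and an external signal $u(\cdot)$, set $x_j(t)=\phi_t^{u(\cdot)}(x_j)$ and $e=x_1(t)-x_2(t)$. Under the pre-feedback law each trajectory obeys $\dot x_j=f(x_j)+g(x_j)(u+Kx_j)$, so
\[
\dot e=\bigl[f(x_1)+g(x_1)(u+Kx_1)\bigr]-\bigl[f(x_2)+g(x_2)(u+Kx_2)\bigr].
\]
Introducing the common input $w=u+Kx_2$ and adding and subtracting $g(x_1)w$, this splits as $\dot e=\bigl\{[f(x_1)+g(x_1)w]-[f(x_2)+g(x_2)w]\bigr\}+g(x_1)Ke$, isolating an open-loop drift difference evaluated at a single shared input from the pre-feedback contribution.

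The crux is to turn this into a containment $\dot e\in\mathrm{Conv}(A_i+B_iK)\,e$. The restriction to $\{x\mid x^\intercal P x\le \min_i u_i^{\max}/\sqrt{K_iP^{-1}K_i^\intercal}\}$ is exactly what guarantees, through \eqref{eq:K_opt}, that $|K_ix|\le u_i^{\max}$, so the pre-feedback remains admissible and the closed loop stays in the regime for which the standing assumption and the LMI were built. In that regime the standing hypothesis applies to the first brace, giving $[f(x_1)+g(x_1)w]-[f(x_2)+g(x_2)w]\in\mathrm{Conv}(A_i)\,e$, and pairing this with $g(x_1)Ke$ (routing $g(x_1)$ through the matched $B_i$) produces convex weights $\lambda_i\ge 0$, $\sum_i\lambda_i=1$, with $\dot e=\sum_i\lambda_i(A_i+B_iK)e$. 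I expect this to be the main obstacle: the two trajectories experience the \emph{different} pre-feedback inputs $u+Kx_1$ and $u+Kx_2$, so some care is needed to channel both the drift difference and the feedback term through a \emph{single} convex combination of the matched pairs $(A_i,B_i)$ rather than two independent ones, and to confirm that the shared input $w$ stays in the range covered by the assumption.

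Finally I would differentiate. With $\dot e=\bigl(\sum_i\lambda_i(A_i+B_iK)\bigr)e$,
\[
\dot V=\sum_i\lambda_i\,e^\intercal\bigl[(A_i+B_iK)^\intercal P+P(A_i+B_iK)\bigr]e .
\]
Each bracketed matrix is negative semidefinite by the LMI constraint in \eqref{eq:LMI}, and the weights satisfy $\lambda_i\ge 0$, so the whole sum is $\le 0$, i.e.\ $\dot V\le 0$ everywhere on the stated set. Invoking Proposition~2 then yields local incremental stability within $\mathcal{X}\cap\{x\mid x^\intercal P x\le \min_i u_i^{\max}/\sqrt{K_iP^{-1}K_i^\intercal}\}$, completing the argument.
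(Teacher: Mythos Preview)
Your proposal follows essentially the same route as the paper: use the convex-hull assumption on the difference dynamics together with the linearity of the Lyapunov inequality in $(A_i,B_iK)$ to conclude $\dot V(e)\le 0$, and then invoke Proposition~2. The paper's proof compresses this into two lines (``since $A$ and $B$ enter linearly into the Lyapunov condition, convexity shows $\dot V\le 0$''); the explicit error decomposition you write out, and the subtlety you flag about channeling the drift difference and the pre-feedback term through a \emph{single} convex combination of the matched pairs $(A_i,B_i)$, are precisely the details the paper's argument elides rather than resolves.
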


\begin{proof}
Since the $A$ and $B$ matrix enters linearly into the Lyapunov condition in \eqref{eq:LMI}, by the assumption that $f(x_1)+g(x_1)u-f(x_2)-g(x_2)u\in Conv{(A_i)(x_1-x_2)}$, convexity shows that $\dot{V}(x_1-x_2)\le 0$, which shows incremental stability.
\end{proof}

\subsection{Generalizing State Uncertainty}\label{sec:generalizing_uncertainty}

The CBF condition shown in Equation \eqref{eq:robust_CBF} is shown for a specific uncertainty set $\mathbf{x_0} = R(x_0,\Delta_t)$ that arises from the sampled-data nature of the system. For an incrementally stable dynamic system, the CBF condition is rewritten as
\begin{equation}\label{eq:robust_general_cbf}
\resizebox{1\columnwidth}{!}{$
\begin{aligned}
    \left. \frac{dh}{dx}\right\rvert_{\left.\phi^{\overline{u_B}(\cdot)}_\tau(x_0)\right\rvert_{\mathbf{x_0}}} \left. \frac{\partial \phi^{\overline{u_B}(\cdot)}_\tau}{\partial x}\right\rvert_{\mathbf{x_0}}&\left( f(\mathbf{x_0})+g(\mathbf{x_0})u \right) + \alpha(h(\left.\phi^{\overline{u_B}(\cdot)}_\tau(x)\right\rvert_{\mathbf{x_0}})) \\
    \left. \frac{dh_B}{dx}\right\rvert_{\left.\phi^{\overline{u_B}(\cdot)}_T(x_0)\right\rvert_{\mathbf{x_0}}}\left. \frac{\partial \phi^{\overline{u_B}(\cdot)}_\tau}{\partial x}\right\rvert_{\mathbf{x_0}}&\left( f(\mathbf{x_0})+g(\mathbf{x_0})u \right) + \alpha(h_B(\left.\phi^{\overline{u_B}(\cdot)}_T(x)\right\rvert_{\mathbf{x_0}}))
\end{aligned}
$}
\end{equation}
The new set in which the constraint is being evaluated is $\mathbf{x_0} := \mathcal{R}(x_0,\Delta_t) + \bm{\Delta}_x$, where $\bm{\Delta}_x \subset \mathbb{R}^n$ is the state uncertainty set such that the estimated value of the state $\tilde{x} \in x+\bm{\Delta}_x$, with $x$ being the true state. The other major difference from Equation \eqref{eq:robust_CBF} is that the flow over the backup trajectory is now being computed for the nominal value of $x_0$, and it is simply being evaluated over the set $\phi^u_t(x_0)+\bm{\Delta}_x$. This greatly simplifies the computation, and makes the constraint tractable in real-time.

\begin{thm}\label{thm:robust_invariance}
Let $\overline{u}(\cdot)$ be a input signal with zero-order hold that satisfies \eqref{eq:robust_general_cbf}. If the system is incrementally stable in $\overline{\mathcal{S}_I}$, then $\phi_t^{\overline{u}(\cdot)}(x_0) \in \overline{\mathcal{S}_I}$ for all $t \geq 0$.
\end{thm}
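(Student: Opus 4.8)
The plan is to deduce Theorem \ref{thm:robust_invariance} from the zero-order hold invariance Proposition proved at the end of Section \ref{sec:zoh}, using incremental stability to show that the generalized condition \eqref{eq:robust_general_cbf} --- which integrates only the nominal initial condition $x_0$ and then inflates the evaluation by the fixed uncertainty set $\bm{\Delta}_x$ --- already enforces the genuinely robust condition \eqref{eq:robust_CBF} over the true set of reachable and uncertain states $\mathbf{x_0} = \mathcal{R}(x_0,\Delta_t) + \bm{\Delta}_x$. In other words, if I can certify the set containment $\phi^{\overline{u}(\cdot)}_\tau(\mathbf{x_0}) \subseteq \phi^{\overline{u}(\cdot)}_\tau(x_0) + \bm{\Delta}_x$ for every $\tau \in [0,T]$, then every point at which the CBF inequalities must hold lies inside the set $\phi^{\overline{u}(\cdot)}_\tau(x_0) + \bm{\Delta}_x$ over which \eqref{eq:robust_general_cbf} posts them, and the problem collapses to the already-established zero-order hold case.

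First I would establish that containment. Taking $\bm{\Delta}_x$ to be a level set $\{x \mid V(x) \le \epsilon\}$ of the incremental Lyapunov function, chosen large enough to dominate the combined one-step reachability and state-estimation error, the incremental-stability consequence recorded at the end of Section \ref{incremental} gives, for any fixed input signal and any perturbed start $x \in x_0 + \bm{\Delta}_x$, that $\phi^{\overline{u}(\cdot)}_\tau(x) \in \{y \mid V(y - \phi^{\overline{u}(\cdot)}_\tau(x_0)) \le \epsilon\}$. This is exactly $\phi^{\overline{u}(\cdot)}_\tau(x_0 + \bm{\Delta}_x) \subseteq \phi^{\overline{u}(\cdot)}_\tau(x_0) + \bm{\Delta}_x$, and it holds because $\dot V \le 0$ along the error trajectory keeps $V(\phi_\tau^{\overline{u}(\cdot)}(x) - \phi_\tau^{\overline{u}(\cdot)}(x_0)) \le V(x - x_0) \le \epsilon$, so the uncertainty ball never expands along the flow. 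Combining this with the fact that the CBF inequalities in \eqref{eq:robust_general_cbf} are imposed at every point of $\phi^{\overline{u}(\cdot)}_\tau(x_0) + \bm{\Delta}_x$ shows that the barrier condition is satisfied at every truly reachable state.

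With the containment in hand, the remainder mirrors the zero-order hold Proposition. On each sampling interval $[t_0, t_0 + \Delta_t]$ with $x_{t_0} \in \overline{\mathcal{S}_I}$, the actual (uncertain, zero-order hold) trajectory stays inside $\phi^{\overline{u}(\cdot)}_\tau(x_{t_0}) + \bm{\Delta}_x$, so the CBF condition holds along it; by the nonsmooth barrier lemma invoked in Remark \ref{rem:nonsmooth} (the requirement $\dot h \ge -\alpha(h)$ almost everywhere suffices), $h$ and $h_B$ remain nonnegative and $\overline{\mathcal{S}_I}$ is rendered forward invariant over the step. Induction over the sequence of intervals then yields $\phi_t^{\overline{u}(\cdot)}(x_0) \in \overline{\mathcal{S}_I}$ for all $t \ge 0$. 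I expect the main obstacle to be the bookkeeping in the containment step: verifying that the error trajectory itself remains inside the region where $\dot V \le 0$ was certified (so that incremental stability may legitimately be invoked on the closed-loop flow), and that the single level set $\bm{\Delta}_x$ can be sized to absorb both $\mathcal{R}(x_0,\Delta_t) - x_0$ and the estimation error simultaneously, while the pre-feedback $u + Kx$ used to secure incremental stability coincides with the signal actually applied in $\phi^{\overline{u}(\cdot)}$.
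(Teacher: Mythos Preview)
Your proposal is correct and follows essentially the same route as the paper: establish the set containment $\phi^{\overline{u_B}(\cdot)}_\tau(\mathbf{x_0}) \subseteq \phi^{\overline{u_B}(\cdot)}_\tau(x_0) + \bm{\Delta}_x$ from incremental stability, use it to show that \eqref{eq:robust_general_cbf} implies \eqref{eq:robust_CBF}, and then invoke Proposition~1. The paper's version spells out the intermediate step that the containment $\Phi_1 \subset \Phi_2$ propagates through $\left.\tfrac{dh}{dx}\right\rvert_{(\cdot)}$ and $\alpha(h(\cdot))$ term-by-term, but this is exactly what you summarize as ``the CBF condition is satisfied at every truly reachable state,'' and the caveats you flag at the end (validity region for $\dot V\le 0$, sizing $\bm{\Delta}_x$, pre-feedback consistency) are genuine but are treated by the paper only implicitly.
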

\begin{proof}
From incremental stability, we have $\forall ~ x_1,x_2 \in \overline{\mathcal{S}_I}$, and for $\beta:\mathbb{R}\times[0,T]\to\mathbb{R}$ nonincreasing in $t$, and $\forall ~ t_1,t_2 \in \mathbb{R}_+$ with $t_2>t_1$,
\begin{alignat}{2}
    &&\left\|\phi^{u(\cdot)}_t(x_1)-\phi^{u(\cdot)}_t(x_2)\right\|&\le \beta(\left\|x_1-x_2\right\|,t) \notag \\
    \ArrowBetweenLines[\Downarrow]
    && \left\|\phi^{u(\cdot)}_{t_2}(x_1)-\phi^{u(\cdot)}_{t_2}(x_2)\right\|&\le \left\|\phi^{u(\cdot)}_{t_1}(x_1)-\phi^{u(\cdot)}_{t_1}(x_2)\right\|
\end{alignat}
Therefore, 
\begin{alignat}{2}
     &\left\|\phi^{u(\cdot)}_{0}(x_1)-\phi^{u(\cdot)}_{0}(x_2)\right\|= \left\|x_1-x_2\right\| \notag \\
    \ArrowBetweenLines[\Downarrow]
    & \phi^{\overline{u_B}(\cdot)}_t(\mathbf{x_0}) \subset \left.\phi^{\overline{u_B}(\cdot)}_t(x_0)\right\rvert_{\mathbf{x_0}}
\end{alignat}

Fix any $\mathbf{x_0}$, $\bar{u}$, $t$. For brevity, let $\Phi_1 := \phi^{\overline{u_B}(\cdot)}_t(\mathbf{x_0})$ and $\Phi_2 := \left.\phi^{\overline{u_B}(\cdot)}_t(x_0)\right\rvert_{\mathbf{x_0}}$, and let $\mathbf{\dot{x}} := \left. \frac{\partial \phi^{\overline{u_B}(\cdot)}_t}{\partial x}\right\rvert_{\mathbf{x_0}} \left(f(\mathbf{x_0})+g(\mathbf{x_0})u\right)$.

\begin{equation*}
  \Phi_1 \subset \Phi_2\;\;\Rightarrow\;\;
  \left.\frac{dh}{dx}\right\rvert_{\Phi_1} \subset \left.\frac{dh}{dx}\right\rvert_{\Phi_2}\;\;\Rightarrow\;\;
  \left.\frac{dh}{dx}\right\rvert_{\Phi_1} \mathbf{\dot{x}}\subset \left.\frac{dh}{dx}\right\rvert_{\Phi_2} \mathbf{\dot{x}}.
\end{equation*}

Following the same logic, we have 
\begin{equation*}
  \Phi_1 \subset \Phi_2\quad
    \Rightarrow\quad
  \alpha(h(\Phi_1)) \subset \alpha(h(\Phi_2))
\end{equation*}
Thus, 
$$
  \left.\frac{dh}{dx}\right\rvert_{\Phi_1} \mathbf{\dot{x}}+\alpha(h(\Phi_1)) \subset \left.\frac{dh}{dx}\right\rvert_{\Phi_2}\mathbf{\dot{x}} + \alpha(h(\Phi_2)).
$$

Therefore, any $\mathbf{x_0},\bar{u}(\cdot)$ that meets condition \eqref{eq:robust_general_cbf} will also meet condition $\eqref{eq:robust_CBF}$, and by Proposition 1, $\phi_t^{\overline{u}(\cdot)}(x_0) \in \overline{\mathcal{S}_I}$ for all $t \geq 0$
\end{proof}

\section{Input Delay}
\label{sec:input_delay}



In the previous sections, it is shown that safety can be guaranteed for sampled-data systems with an affine constraint using control barrier functions. This section will extend the safety guarantees to systems with known time-delay, without simply making the barrier robust to a set of possible input delays, which would degrade system performance. Note that the analysis in this section can be done for any robust control invariant set, not limited to the version discussed in previous sections.

\subsection{Preliminaries}

\begin{ass}
Suppose that the system has a time delay equal to some integer $n$ multiple of the controller period $\Delta_t$. Therefore, the system evolves with dynamics
\begin{equation}
\label{eq:zohdelaydyn}
    \dot{x} = f(x) + g(x)\bar{u}(x,t-n\Delta_t)
\end{equation}
for zero-order hold controller $\bar{u}$.
\label{ass:integer}
\end{ass}
This is a reasonable assumption, especially for the time delay caused by the numerical computation of the digital controller. Moreover, rounding of the time-delay can always be made robust with an addition to the state uncertainty.

Since it is not possible to provide any input to the system before time $t = n \Delta_t$, one more assumption is required.

\begin{ass}
From any initial set of states $\mathbf{x_0} = x_0 + \bm{\Delta}_x \subset \overline{\mathcal{S}_I}$ , we require
\begin{equation}
    \phi^0_{n\Delta_t}(\mathbf{x_0}) \subset \overline{\mathcal{S}_I}.
\end{equation}
Here, the $0$ in $\phi^0_{n\Delta_t}$ denotes the fact that a control input of zero is applied to the system during this time.
\label{ass:passive}
\end{ass}

In practice, this is not a restrictive assumption since the initial condition can be set well within the safe set. Moreover, if this is not met, there is no hope to keep the system safe whatsoever.

In order to obtain the state at which the control input will be applied, the most recent $n \Delta_t$ inputs must be stored in a vector $\bar{u}_H$. Since no input can be applied during time $t \in [0,n \Delta_t]$, the input vector is initialized to all zeros. With this, the state at which the $i$th computed control action will be applied can be expressed as
\begin{align}
    x_{(i+n) \Delta_t} = \phi^{\bar{u}_H}_{{n \Delta_t}}(x_{i\Delta_t})
    \label{eq:int}
\end{align}
The input history vector $\bar{u}_H$ is executed under zero-order hold, just as the inputs are applied to the system. Starting with the initial state, the algorithm for handling input delay is now described.

\subsection{Algorithm Overview}

At initial time-step $t_0$, the control action to be implemented at time $t = n \Delta_t$ is computed. To keep the system safe, the barrier conditions must be evaluated at state $x_{n \Delta_t}$, which is computed using Equation \eqref{eq:int}. Note that the constraint itself does not need to be altered, and is still affine. The only extra step is the integration from $x_0$ to $x_{n \Delta_t}$.

The input chosen by the quadratic program at time $t_0$ is then placed at the head of the $\bar{u}_H$ buffer, after each previous value is shifted backwards. Thus, the oldest value in the input buffer is lost, as it has already taken effect on the system.

The computation for all future time-steps is outlined in Algorithm \ref{ag:delay}.

\begin{algorithm}
\SetAlgoLined
 double $u_H[n]$ = \{0\}\; $i\gets 0$\; 
 \While{true}{
  $x_{(i+n)\Delta_t} = \phi^{\bar{u}_H}_{{n \Delta_t}}(x_{i\Delta_t})$\;
  Compute safe action with $u_{des},x_{(i+n)\Delta_t}$ using \eqref{eq:robust_general_cbf}\; 
  update $u_H$\;
  $i=i+1$\;
 }
 \caption{CBF with Input Delay of $n\Delta_t$}
 \label{ag:delay}
\end{algorithm}

While this algorithm may seem trivial, it is only made possible by treating the system as a sampled-data system. The continuous case of this algorithm would be much more complex, as there is no finite time-history of inputs to integrate over. Safety under this algorithm is summarized with the following theorem.

\begin{thm}
Given a control invariant set $\overline{\mathcal{S}_I}$, if inputs are chosen with Algorithm \ref{ag:delay}, and the system model is accurate, then the system \eqref{eq:zohdelaydyn} remains safe, i.e. $\phi^{\overline{u}}_{t}(x_0) \in \overline{\mathcal{S}_I}$ for all $t \geq 0$.
\end{thm}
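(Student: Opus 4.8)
The plan is to reduce the delayed, sampled-data problem to the non-delayed sampled-data problem already resolved by Proposition~1 and Theorem~\ref{thm:robust_invariance}, and then to close the argument by induction over the control time-steps. The central observation is that, because the delay $n\Delta_t$ is an integer multiple of the sampling period (Assumption~\ref{ass:integer}) and the controller is applied with zero-order hold, every input that can influence the trajectory on $[i\Delta_t,(i+n)\Delta_t]$ has already been committed to the buffer $\bar{u}_H$ by time-step $i$. Hence the forward integration in \eqref{eq:int}, i.e. $x_{(i+n)\Delta_t}=\phi^{\bar{u}_H}_{n\Delta_t}(x_{i\Delta_t})$, is an \emph{exact} prediction of the true state at which the freshly computed input will take effect, under the hypothesis that the model is accurate. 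This is precisely what lets us treat the decision made at step $i$ as though it were an un-delayed decision taken at state $x_{(i+n)\Delta_t}$.

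First I would establish the base case covering $t\in[0,n\Delta_t]$. During this window no computed input has yet reached the plant, so the buffer holds its initial zeros and the applied input is identically zero; Assumption~\ref{ass:passive} then keeps the trajectory in $\overline{\mathcal{S}_I}$ through time $n\Delta_t$. Since safety is required over the \emph{whole} initial window, I would read this assumption as holding for every $\tau\in[0,n\Delta_t]$, as literally stated it only controls the endpoint $\tau=n\Delta_t$.

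For the inductive step I would assume the true trajectory lies in $\overline{\mathcal{S}_I}$ up through time $(i+n)\Delta_t$, so in particular $x_{(i+n)\Delta_t}\in\overline{\mathcal{S}_I}$. At step $i$ the algorithm computes this state exactly via \eqref{eq:int} and selects $\bar{u}_i$ so that the robust CBF condition \eqref{eq:robust_general_cbf} holds at $x_{(i+n)\Delta_t}$ over its associated reachable-plus-uncertainty set $\mathbf{x_0}=\mathcal{R}(x_{(i+n)\Delta_t},\Delta_t)+\bm{\Delta}_x$. This input is then applied with zero-order hold precisely on $[(i+n)\Delta_t,(i+n+1)\Delta_t]$. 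Since $x_{(i+n)\Delta_t}\in\overline{\mathcal{S}_I}$ and \eqref{eq:robust_general_cbf} is satisfied there, Theorem~\ref{thm:robust_invariance} (and hence Proposition~1) applies verbatim to this single step and yields $\phi^{\overline{u}}_\tau\in\overline{\mathcal{S}_I}$ for all $\tau$ in that interval, so $x_{(i+n+1)\Delta_t}\in\overline{\mathcal{S}_I}$. Chaining the base case with the inductive step over all $i$ covers $[0,\infty)$ and gives the claim.

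I expect the main obstacle to be rigorously justifying the exactness and consistency of the prediction \eqref{eq:int}, rather than the invariance bookkeeping. Specifically, one must verify that the buffer $\bar{u}_H$ used at step $i$ contains exactly the zero-order-hold inputs the plant will experience on $[i\Delta_t,(i+n)\Delta_t]$ --- namely those committed at steps $i-n,\dots,i-1$ --- and that no later decision perturbs the plant before $(i+n)\Delta_t$. This causal, non-anticipatory structure, together with the accurate-model hypothesis, is what makes $\phi^{\bar{u}_H}_{n\Delta_t}(x_{i\Delta_t})$ coincide with the realized state; any mismatch here (finite prediction error, or a delay not exactly integer-aligned) would need to be absorbed into $\bm{\Delta}_x$ and carried through the robust condition \eqref{eq:robust_general_cbf}, which is where the state-uncertainty machinery of Section~\ref{sec:state_uncertianty} would re-enter.
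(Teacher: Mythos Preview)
Your proposal is correct and takes essentially the same approach as the paper: both arguments hinge on the exactness of the prediction \eqref{eq:int} under the accurate-model hypothesis, reducing the delayed step to a non-delayed CBF step and invoking Theorem~\ref{thm:robust_invariance}. The only difference is cosmetic---you run a forward induction over time-steps, while the paper argues by contradiction on a minimal first failure time $m\Delta_t$---and your version is, if anything, a bit more careful about safety between sample times and about the role of Assumption~\ref{ass:passive} in the base case.
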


\begin{proof}
Assume by contradiction that for some time $t = m \Delta_t$, $x_{m\Delta_t} \notin \overline{\mathcal{S}_I}$. Let this be the first time in which $x \notin \overline{\mathcal{S}_I}$, thus $x_{k\Delta_t} \in \overline{\mathcal{S}_I} ~ \forall k < m$.

Because system integration is accurate, 
\begin{align*}
    x_{m\Delta_t} &= \phi^{\overline{u}^m_H}_{n\Delta_t}(x_{(m-n)\Delta_t})  \quad \overline{u}^m_H = [u_{m-2n+1},...,u_{m-n}] \\
    x_{(m-1)\Delta_t} &= \phi^{\overline{u}^{m-1}_H}_{n\Delta_t}(x_{(m-n-1)\Delta_t})  \quad \overline{u}^{m-1}_H= [u_{m-2n},...,u_{m-n-1}]
\end{align*}  
Since $x_{(m-1)\Delta_t} \in \overline{\mathcal{S}_I}$, the control inputs chosen up until time $t_{m-n-1}$ keep the system in $\overline{\mathcal{S}_I}$. Therefore, the control action $u_{m-n}$ must cause the system to exit $\overline{\mathcal{S}_I}$. The CBF condition at $t=(m-n-1)\Delta_t$ is based on $\phi^{\overline{u}^{m-1}_H}_{n\Delta_t}(x_{(m-n-1)\Delta_t})$, the estimated $x_{(m-1)\Delta_t}$, but since the model is assumed to be correct, it is equal to the actual state. However, if $u_{m-n}$ was computed via CBF condition, then $\phi^{\overline{u}^m_H}_{n\Delta_t}(x_{(m-n)\Delta_t}) \in \overline{\mathcal{S}_I}$ by Theorem \ref{thm:robust_invariance}. This implies that $x_{m\Delta_t} \neq \phi^{\overline{u}^m_H}_{n\Delta_t}(x_{(m-n)\Delta_t})$, which contradicts the assumption that the model is accurate.

\end{proof}

\begin{figure*}[t]
     \centering
  \includegraphics[width=1\textwidth]{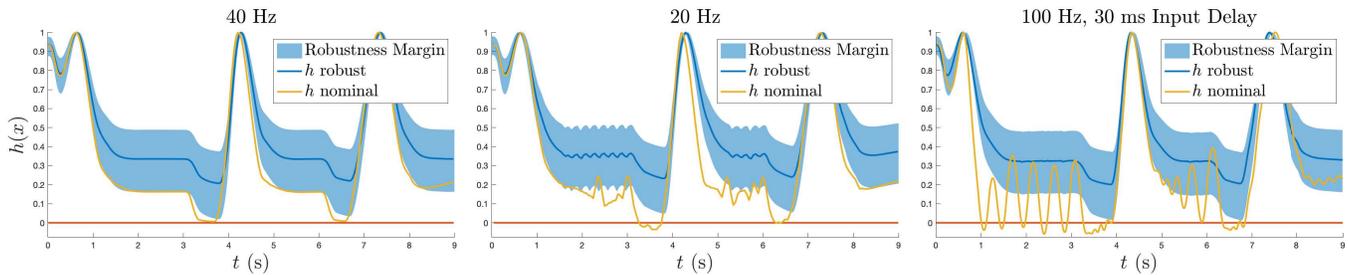}
  \caption{Results from simulations with three different controller frequencies, and one with input delay.}
  \label{fig:onlyplot}
\end{figure*}

The set $\overline{\mathcal{S}_I}$ described in Section \ref{sec:zoh} is an example of one such control invariant set robust to zero-order hold, but this theorem holds for any other such set.

It is important to recognize the fact that, under this algorithm, one is effectively performing open-loop control over the time-horizon of the input delay. However, due to the state uncertainty result from Section \ref{sec:state_uncertianty}, it is possible to guarantee safety for a range of possible initial conditions that the system is expected to lay within at the time of the control input being enacted. Thus, we have the following extension:

\begin{cor}
Given an invariant set $\overline{\mathcal{S}_I}$, robust to state uncertainty  $\bm{\Delta}_x$, if inputs are chosen with Algorithm \ref{ag:delay}, and $\phi^{\bar{u}_H}_{{n \Delta_t}}(x_0) \in x_{n\Delta t} + \bm{\Delta}_x$ for any $x_0 \in \overline{\mathcal{S}_I}$ (i.e. the system integration is accurate up to the set uncertainty set $\bm{\Delta}_x$), then the system \eqref{eq:zohdelaydyn} remains safe, i.e. $x(t) \in \overline{\mathcal{S}_I}$ for all $t \geq 0$.
\end{cor}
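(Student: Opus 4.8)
The plan is to reuse, almost verbatim, the contradiction argument from the proof of the preceding theorem (the exact-model input-delay result), and to replace the single step where model accuracy was invoked with an appeal to the robustness of $\overline{\mathcal{S}_I}$ to the uncertainty set $\bm{\Delta}_x$ that is guaranteed by Theorem \ref{thm:robust_invariance}. Concretely, I would suppose for contradiction that $t=m\Delta_t$ is the first sampling instant at which the true state leaves the set, so that $x_{k\Delta_t}\in\overline{\mathcal{S}_I}$ for all $k<m$, and then isolate the computed input $u_{m-n}$ whose zero-order-hold action governs the final transition into $x_{m\Delta_t}$, exactly as in that proof where $x_{m\Delta_t}=\phi^{\bar{u}_H}_{n\Delta_t}(x_{(m-n)\Delta_t})$.

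The key observation I would exploit is that, when Algorithm \ref{ag:delay} computed $u_{m-n}$, the condition \eqref{eq:robust_general_cbf} was not enforced at a single point but over the whole set $\mathbf{x_0}=\mathcal{R}(\hat{x},\Delta_t)+\bm{\Delta}_x$, where $\hat{x}=\phi^{\bar{u}_H}_{n\Delta_t}(x_{(m-n)\Delta_t})$ is the predicted application state. I would then invoke the hypothesis $\phi^{\bar{u}_H}_{n\Delta_t}(x_0)\in x_{n\Delta t}+\bm{\Delta}_x$, applied at this step, to conclude that the true application state lies within $\bm{\Delta}_x$ of $\hat{x}$ and is therefore contained in $\mathbf{x_0}$ (using $\hat{x}\in\mathcal{R}(\hat{x},\Delta_t)$ together with symmetry of $\bm{\Delta}_x$). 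Since the robust condition \eqref{eq:robust_general_cbf} holds over all of $\mathbf{x_0}$ and $\overline{\mathcal{S}_I}$ is robust to $\bm{\Delta}_x$, Theorem \ref{thm:robust_invariance} guarantees invariance for the trajectory started from every point of $\mathbf{x_0}$, and in particular from the true state; hence $x_{m\Delta_t}\in\overline{\mathcal{S}_I}$, contradicting the choice of $m$.

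The conceptual crux, and the reason the corollary is essentially free once Theorem \ref{thm:robust_invariance} is available, is the recognition that the open-loop integration error incurred over the delay horizon plays exactly the role of the state-uncertainty set $\bm{\Delta}_x$ that the robust CBF condition was already designed to absorb: the prediction never needs to be exact, only accurate to within $\bm{\Delta}_x$. The main obstacle I anticipate is pure bookkeeping with the set inclusions. One must line up the timing and indexing so that the input responsible for the first exit is provably the one computed against a set $\mathbf{x_0}$ that contains the true application state, and one must check the containment direction carefully, since the hypothesis bounds the \emph{predicted} state relative to the \emph{true} state (so reaching the true state from the prediction requires $-\bm{\Delta}_x\subseteq\bm{\Delta}_x$). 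Once these inclusions are pinned down, the remainder is a line-by-line repetition of the preceding theorem's contradiction, with the phrase ``equals the actual state'' replaced throughout by ``contains the actual state.''
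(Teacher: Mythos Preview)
Your proposal is correct and matches the paper's approach: the paper's own proof is a single sentence stating that the result follows directly from Theorem~\ref{thm:robust_invariance} together with the state-uncertainty guarantees of Section~\ref{sec:state_uncertianty}, which is precisely the mechanism you identify (the open-loop prediction error over the delay horizon is absorbed by the $\bm{\Delta}_x$ robustness already built into \eqref{eq:robust_general_cbf}). Your version simply unpacks this into the explicit contradiction argument of the preceding theorem with ``equals the actual state'' replaced by ``contains the actual state,'' and your caution about the inclusion direction and symmetry of $\bm{\Delta}_x$ is a fair bookkeeping point that the paper leaves implicit.
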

\begin{proof}
The proof follows directly from the Theorem 1, and simply utilizes the guarantees over state uncertainty from Section \ref{sec:state_uncertianty}, or from \cite{gurriet2019realizable} for more general control barrier functions.
\end{proof}

\section{Simulation of Results on Segway Robot}
\label{sec:sim_results}


\begin{wrapfigure}{L}{0.21\textwidth}
     \centering
  \includegraphics[width=.21\textwidth]{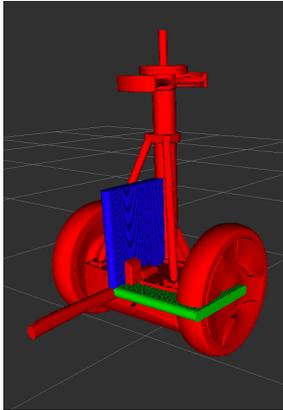}
  \caption{Segway in simulation. The sequence of axes show the system along the backup trajectory.}
  \label{fig:notonlyplot}
\end{wrapfigure}

The simulation is done in a ROS-based simulation environment. The full, nonlinear dynamics are integrated under zero-order hold at a variable sampling time $\Delta_t$. The true state of the system is not known to the controller, only the state estimate from an extended Kalman filter. This state observer receives noisy sensor data based on the true state of the system. A pre-feedback gain was computed following Section \ref{incremental}.

The Segway has 4 states $x=[p,\dot{p},\theta,\dot{\theta}]^\intercal$. The safe set is described by $\mathcal{S} =\{x| 1-4p^2\geq 0\}$, which enforces the robot position $p$ to stay within a 0.5 m range from the origin. The robust barrier condition \eqref{eq:robust_general_cbf} is evaluated over sets using the interval arithmetic library libaffa \cite{gay2006libaffa}. The constraint is imposed at the 10 closest points to the boundary of the safe set along the backup trajectory.

Figure \ref{fig:onlyplot} shows the result of three simulations with the nominal CBF conditions \eqref{eq:dsi}, and the robust condition \eqref{eq:robust_general_cbf}. The robust condition is set to handle a state uncertainty set based on the uncertainty caused by the zero-order hold and the Kalman filter. At 40 Hz, the Segway is able to stay within the set with the nominal controller, but it is unable to maintain invariance at 20 Hz, or in the presence of an input delay of 30 ms. The robust barrier is able to maintain safety for not just the nominal trajectory, but over the entire robustness margin.

\section{Conclusion}
\label{sec:conclusion}
In this paper, the authors first introduced a robust variation of the backup-controller-based control barrier function that guarantees safety in the presence of a zero-order controller, as well as state uncertainty. The barrier can be used without computing any robust control invariant sets, and can be evaluated in real-time as an affine constraint. An extension of control barrier functions for systems with known input delay was then introduced, made possible only by treating the system as sampled-data as opposed to continuous. Finally, the theory was validated in a real-time high-fidelity simulation environment of a Segway robot. For future work, the authors would like to explore solving this problem using discrete-time control barrier functions in a nonlinear program. 

\bibliography{refs.bib}
\bibliographystyle{IEEEtran}

\end{document}